\documentclass[pdflatex,sn-mathphys-num]{sn-jnl}
\usepackage{comment}
\usepackage{graphicx}%
\usepackage{amsmath,amssymb,amsfonts}%
\usepackage{amsthm}%
\usepackage{mathrsfs}%
\usepackage{xcolor}%
\usepackage{textcomp}%
\usepackage{booktabs}%
\usepackage{tabularx}%
\usepackage{float}
\IfFileExists{placeins.sty}{\usepackage{placeins}}{\newcommand{\FloatBarrier}{\clearpage}}%

\floatstyle{ruled}
\newfloat{algorithm}{tbp}{loa}
\floatname{algorithm}{Algorithm}
\makeatletter
\providecommand*{\toclevel@algorithm}{1}
\newcounter{ALG@line}
\newenvironment{algorithmic}[1][0]{%
  \setcounter{ALG@line}{0}%
  \def\ALG@freq{#1}%
  \begin{list}{\ifnum\ALG@freq>0\refstepcounter{ALG@line}\arabic{ALG@line}:\fi}{%
\setlength{\leftmargin}{2.2em}%
\setlength{\labelwidth}{1.8em}%
\setlength{\labelsep}{0.4em}%
\setlength{\itemsep}{2pt}%
\setlength{\parsep}{0pt}%
\setlength{\topsep}{3pt}%
  }%
}{\end{list}}
\newcommand{\Require}{\item[\textbf{Require:}]}
\newcommand{\Ensure}{\item[\textbf{Ensure:}]}
\newcommand{\State}{\item}
\newcommand{\For}[1]{\State \textbf{for} #1 \textbf{do}}
\newcommand{\EndFor}{\State \textbf{end for}}
\makeatother

\theoremstyle{thmstyleone}%
\newtheorem{theorem}{Theorem}[section]
\newtheorem{proposition}[theorem]{Proposition}

\theoremstyle{thmstylethree}%
\newtheorem{assumption}[theorem]{Assumption}

\theoremstyle{thmstyletwo}%

\newcommand{\dd}{\mathrm d}
\newcommand{\E}{\mathbb E}
\newcommand{\Q}{\mathbb Q}
\newcommand{\R}{\mathbb R}
\newcommand{\F}{\mathcal F}

\providecommand{\footnoteA}[1]{\footnote{#1}}

\hypersetup{hypertexnames=false}

\makeatletter
\renewenvironment{table}[1][]{%
  \par\addvspace{12pt}%
  \noindent\begin{minipage}{\textwidth}%
  \def\@captype{table}%
  \centering%
  \tablebodyfont%
}{%
  \end{minipage}%
  \par\addvspace{12pt}%
}

\makeatother

\begin{document}

\title[A Hybrid LSMC–PDE Method for Bermudan Option Pricing under the Gatheral Double Mean-Reverting Model]{A Hybrid LSMC–PDE Method for Bermudan Option Pricing under the Gatheral Double Mean-Reverting Model}

\author[1]{\fnm{Mara Kalicanin} \sur{Dimitrov}}\email{mara.kalicanin.dimitrov@mdu.se}
\author*[1]{\fnm{Ying} \sur{Ni}}\email{ying.ni@mdu.se}

\affil*[1]{\orgdiv{Department of Business and Mathematics}, \orgname{M\"{a}lardalen University}, \orgaddress{\postcode{721~23}, \city{V\"{a}ster{\aa}s}, \country{Sweden}}}

\abstract{
We study Bermudan option pricing under the Gatheral Double Mean-Reverting (GDMR) stochastic volatility model. The model features a variance process together with a stochastic long-run mean variance process and allows Constant Elasticity of Variance (CEV)-type exponents in the diffusion coefficients. This model is attractive since it provides a flexible specification for volatility dynamics. However, the pricing of early-exercise derivatives under the GDMR model remains largely unexplored in the literature. To address this challenge, we adapt a Hybrid Least-Squares Monte Carlo–Partial Differential Equation (LSMC–PDE) framework to the GDMR model and provide a detailed model-specific implementation. Conditioning on simulated variance paths, the pricing problem reduces to a one-dimensional problem in the asset price, which is solved by a Fourier-based approach, while the remaining dependence on the variance variables is approximated by least-squares regression. Our  numerical experiments demonstrate that the Hybrid LSMC–PDE approach yields accurate pricing estimates and often lower pricing errors than plain LSMC, particularly for low and moderate numbers of simulation paths, showing the benefit of using the model structure in early-exercise option pricing.
}
\keywords{Bermudan option pricing, stochastic volatility, Hybrid LSMC--PDE, least-squares Monte Carlo, Gatheral model}

\maketitle

\section{Introduction}

Bermudan option pricing under stochastic volatility requires repeated
approximation of continuation values defined as conditional expectations of discounted future payoffs. Deterministic Partial Differential Equation (PDE) methods handle this recursion well in low-dimensional Markovian models, but their cost grows quickly when several
volatility factors are present. Under a multi-factor stochastic volatility model, the resulting three or higher dimensional pricing problem becomes increasingly computationally demanding.

Regression-based Monte Carlo approaches for pricing American-style options were developed by Tsitsiklis and Van Roy~\cite{tsitsiklis2001regression} and later popularized through the Least-Squares Monte Carlo (LSMC) method of Longstaff and Schwartz~\cite{longstaff2001valuing}, which is now widely used for pricing options with early exercise features.
Clément, Lamberton, and Protter~\cite{clement2002analysis} analyzed the
convergence of the least-squares regression method for American option pricing.
These methods are flexible, but in stochastic-volatility models the regression approximation of the continuation value function must capture the dependence on both asset and
volatility variables.

Mixed Monte Carlo--PDE methods, which combine simulation with conditional PDE calculations, have shown to be promising for option pricing problems under stochastic-volatility models; see, for example, Loeper and Pironneau~\cite{loeper2009mixed}, Lipp, Loeper, and Pironneau~\cite{lipp2013mixing}, and Cozma and Reisinger~\cite{cozma2017mixed}. More recently, Farahany, Jackson, and Jaimungal~\cite{farahany2020mixing} developed a Hybrid LSMC--PDE method for pricing Bermudan options under stochastic volatility, which forms the methodological basis of the present work. 
Their algorithm simulates the volatility process, solves conditional
asset-direction PDE problems along simulated volatility paths, and then
regresses the resulting conditional values over the volatility state. They also
prove almost-sure convergence for a class of stochastic-volatility models.

We study Bermudan option pricing under the Gatheral-type double mean-reverting stochastic volatility (GDMR) model, introduced in Gatheral~\cite{gatheral2008}, where successful simultaneous calibration to both SPX and VIX option markets was demonstrated. Under this model, the instantaneous variance $v$ mean-reverts to a stochastic variance level $v'$, while $v'$ mean-reverts to a constant long-run level. This double mean-reverting structure is motivated by the
volatility-surface literature (see, e.g., Heston~\cite{heston1993closedform}, Cox, Ingersoll,
and Ross~\cite{cir1985termstructure}, and Gatheral~\cite{gatheral2006volsurface}). 

Despite its appealing modelling properties, the GDMR model does not admit closed-form pricing formulas, even for European options. As a result, calibration in Gatheral~\cite{gatheral2008} was performed using Euler simulation, while Bayer, Gatheral, and Karlsmark~\cite{bayer2013fast} later employed the higher-order Ninomiya--Victoir scheme to improve computational efficiency. Moreover, there has been limited research on Bermudan and American option pricing under this framework. To the best of our knowledge, Haida et al.~\cite{haida2026almost} is among the few studies addressing this problem, proposing a simulation scheme together with a full Least-Squares Monte Carlo (LSMC) method for American option pricing under a double-Heston-type parameterization of the GDMR model. In this work, we aim to fill this gap by developing a Hybrid LSMC--PDE method, adapted from the more general framework of Farahany, Jackson, and Jaimungal~\cite{farahany2020mixing}. This approach is motivated by the computational challenges of applying a full PDE method in the presence of three state variables, while avoiding the need for a full regression approximation of the continuation value as in standard LSMC. Although our discussion focuses on Bermudan option pricing, the framework naturally extends to early-exercise derivatives more broadly through Bermudan approximations of American options.

We note two advantages of this hybrid method relative to the plain LSMC method.  First, rather than approximating the entire continuation value through a parametric regression approximation, the Hybrid LSMC--PDE method performs regression only partially, leaving the stock-price dependence to be resolved through a conditional pricing PDE. Second, the conditional PDE step naturally produces option values on a grid of stock prices, corresponding to multiple moneyness levels for a fixed strike. This is advantageous for calibration, where option prices across different moneyness values and maturities must be computed repeatedly.

The present paper focuses on adapting the Hybrid LSMC–PDE framework to the GDMR model, motivated by the considerations discussed above. This specialization is nontrivial for the Gatheral double mean-reverting model, because the model has two correlated variance Brownian drivers and Constant Elasticity of Variance (CEV)--type variance coefficients. Our contribution is thus a model-specific formulation and implementation of an existing Hybrid LSMC–PDE framework for the multifactor GDMR setting. First, we derive the Brownian projection and the conditional one-step Gaussian representation needed to apply the existing Hybrid LSMC--PDE framework to the GDMR model. The key model-specific quantities are the correlated Brownian shift $Z_n$, the integrated drift $I_n$, the integrated residual variance $B_n$, and the terminal volatility
state.

From a computational perspective, we provide a detailed description of the FFT-based numerical implementation and numerical experimental results which may serve as benchmark values for future studies. For the numerical experiments, we use a fixed parameter set based on a calibrated double mean-reverting specification from
Bayer, Gatheral and Karlsmark~\citep{bayer2013fast}. We compare plain LSMC estimator and the Hybrid LSMC--PDE estimator by varying both the Euler step count and the Monte Carlo path budget. Since no closed-form Bermudan option price is available under the GDMR model, the comparisons are made against large-simulation plain LSMC reference prices. Our numerical results indicate that the Hybrid LSMC–PDE method yields accurate pricing estimates and often lower pricing errors than plain LSMC method, particularly for low and moderate numbers of simulation paths.

In addition, we also include a short well-posedness result for the variance subsystem of the GDMR dynamics. This result is not the main contribution of the paper, it is included to make explicit the closed nonnegative variance domain used in the Bermudan recursion and in the conditional construction.

The remainder of the paper is organized as follows. Section~\ref{sec:model}
discusses the GDMR model and gives the
well-posedness result. Section~\ref{sec:conditional_pde} gives the theoretical framework for Bermudan option pricing using the Hybrid LSMC--PDE method and, in particular, derives the
conditional Hybrid LSMC--PDE representation. Section~\ref{sec:hybrid-method} describes
the numerical implementation. Section~\ref{sec:numerical} reports the
experiments. Section~\ref{sec:conclusion} concludes.

\section{Gatheral double mean-reverting model}\label{sec:model}

This section introduces the GDMR model. In this model, the equations for the variance factors do not involve the asset price, while
the asset equation depends on the variance factors. This one-way coupling is
used in Section~\ref{sec:conditional_pde} to obtain the conditional
one-dimensional representation.

\subsection{GDMR dynamics}

Fix a finite horizon $T>0$. Let
$$
(\Omega,\F,(\F_t)_{t\in[0,T]},\Q)
$$
be a filtered probability space satisfying the usual conditions.  Throughout the paper, option pricing is performed under a risk-neutral (equivalent martingale) measure $\Q$ under which the discounted stock price process is assumed to be a true martingale. We assume a deterministic money-market account $B_t=e^{rt}$ where $r$ is the constant risk-free interest rate. Unless otherwise specified, all expectations are taken with respect to the probability measure $\Q$. Furthermore, the notation $\E_\xi[\cdot]$ denotes expectation with respect to the probability distribution of the random variable $\xi$.

The probability space supports a three-dimensional correlated $(\F_t)$-Brownian motion
$$
W=(W^{(1)},W^{(2)},W^{(3)})
$$
with constant quadratic covariations
$$
\langle W^{(i)},W^{(j)}\rangle_t=\rho_{ij}t,\qquad 1\le i,j\le 3.
$$
The matrix $\rho=(\rho_{ij})_{1\le i,j\le 3}$ is symmetric positive
semidefinite and satisfies $\rho_{ii}=1$. We assume $|\rho_{23}|<1$. This ensures that the covariance matrix of the two variance
Brownian drivers, $W^{(2)}$ and $W^{(3)}$, is nonsingular. The assumption is needed in Section~\ref{sec:conditional_pde} where $W^{(1)}$ is decomposed into its projection onto the span of
$(W^{(2)},W^{(3)})$ and an orthogonal residual Brownian component.

We consider the GDMR dynamics of Gatheral~\cite{gatheral2008}, augmented with a risk-free drift term in the stock-price dynamics:
\begin{equation}\label{eq:gdmr}
\begin{aligned}
\dd S_t &= r\,S_t\,\dd t + S_t\sqrt{v_t}\,\dd W^{(1)}_t,\\
\dd v_t &= \kappa_1\,(v'_t - v_t)\,\dd t + \xi_1 v_t^{\delta_1}\,\dd W^{(2)}_t,\\
\dd v'_t &= \kappa_2\,(\theta - v'_t)\,\dd t + \xi_2 (v'_t)^{\delta_2}\,\dd W^{(3)}_t,
\end{aligned}
\end{equation}
where $r\ge0$, $\kappa_1,\kappa_2,\theta,\xi_1,\xi_2>0$, and
$\delta_1,\delta_2\in[1/2,1]$. The factor $v$ is the instantaneous variance and
mean-reverts to the stochastic level $v'$, while $v'$ mean-reverts to the
long-run level $\theta$. The exponents $\delta_1$ and $\delta_2$, referred to as \emph{CEV exponents}, replace the
square-root volatility-of-volatility coefficients by CEV--type powers; the
square-root double mean-reverting model is recovered when
$\delta_1=\delta_2=1/2$.

We further restrict the leverage correlations to satisfy 
$
\rho_{12},
\rho_{13} \in [-1,0].
$
In particular, negative stock-volatility correlations are known to mitigate potential moment explosion effects and support stable martingale behavior in stochastic volatility models as pointed out in Andersen and Piterbarg~\cite{AndersenPiterbarg2007}. This assumption is also consistent with market stylized facts on stock-volatility correlations.

\subsection{Well-posedness and nonnegativity}\label{sec:wellposed}

The purpose of this subsection is to state the state-space property of the variance dynamics used later in the pricing construction. Since the variance factors are simulated first and then used in the conditional Hybrid LSMC--PDE step, the factors $v$ and $v'$ must be well defined and nonnegative on the exercise horizon.

The only delicate point is the boundary at zero. When $\delta_i<1$, the CEV coefficients are not Lipschitz at the boundary, so the usual globally Lipschitz SDE argument does not apply directly. This is why we state the well-posedness and nonnegativity result explicitly.

Mishura, Pilipenko, and Ralchenko~\cite{mishura2025gatheral} also study the Gatheral double stochastic-volatility system. In their notation, the variance factors are $X$ and $Y$, corresponding to $v$ and $v'$ in \eqref{eq:gdmr}, with $a_1=b_1=\kappa_1$, $a_2=\kappa_2\theta$, $b_2=\kappa_2$, $\sigma_i=\xi_i$, and $\alpha_i=\delta_i$. They assume strictly positive deterministic initial values and prove strong existence, uniqueness, nonnegativity, and the strong Markov property. Their paper then studies a different issue: whether the volatility factors can remain close to zero, and how a Skorokhod-reflected modification of the internal factor can prevent this behaviour.

We keep the risk-neutral GDMR dynamics in Eq. \eqref{eq:gdmr}, and we use the result only to justify the closed nonnegative variance domain needed for the Bermudan dynamic programming and for the conditional construction in Section~\ref{sec:conditional_pde}. Thus, the theorem is stated in the closed-domain form used in this paper. We do not study the reflected model, near-zero recurrence, strict positivity, or boundary inaccessibility.


\begin{assumption}[CEV exponents]\label{ass:delta-range}
Throughout this section we impose
$$
\delta_1,\delta_2\in[1/2,1].
$$
\end{assumption}

Let $x^+=\max\{x,0\}$. For $\delta\in[1/2,1]$, the inequality
$$
\bigl|(x^+)^\delta-(y^+)^\delta\bigr|
\le |x-y|^\delta,\qquad x,y\in\R,
$$
implies that the positive-part diffusion coefficient
$\sigma(x)=\xi(x^+)^\delta$ satisfies
$$
|\sigma(x)-\sigma(y)|
\le \xi |x-y|^\delta,\qquad x,y\in\R.
$$
This is the Yamada-Watanabe modulus condition~\cite{yamada1971uniqueness} in
the one-dimensional form needed below. More explicitly, the condition requires a
nondecreasing function $\varrho$ with $\varrho(0)=0$ and $\varrho(u)>0$ for
$u>0$ such that
$$
|\sigma(x)-\sigma(y)|\le \varrho(|x-y|),\qquad x,y\in\R,
$$
and
$$
\int_0^\varepsilon \frac{\dd u}{\varrho(u)^2}
=
\infty
$$
for sufficiently small $\varepsilon>0$. Here we may take
$\varrho(u)=\xi u^\delta$, since
$$
\int_0^\varepsilon \frac{\dd u}{\varrho(u)^2}
=
\xi^{-2}\int_0^\varepsilon u^{-2\delta}\,\dd u
=
\infty,\qquad \delta\ge\frac12.
$$
The upper bound $\delta\le1$ gives at most linear growth.

The state space used for the GDMR dynamics is
$$
\mathcal D=(0,\infty)\times[0,\infty)^2.
$$

\begin{theorem}[Strong well-posedness and nonnegativity]\label{thm:wellposed}
Assume Assumption~\ref{ass:delta-range}. For every $T>0$ and every initial
state
$$
S_0>0,\qquad v_0\ge0,\qquad v'_0\ge0,
$$
the nonnegative-state system \eqref{eq:gdmr} admits a unique strong solution
$(S_t,v_t,v'_t)_{t\in[0,T]}$ with continuous paths such that
$$
S_t>0,\qquad v_t\ge0,\qquad v'_t\ge0,\qquad
0\le t\le T,\quad \Q\text{-a.s.}
$$
Moreover, for every deterministic initial state in $\mathcal D$, the solution
is a time-homogeneous strong Markov process on $\mathcal D$.
\end{theorem}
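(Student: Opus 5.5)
The argument follows the triangular structure of \eqref{eq:gdmr}: first $v'$, then $v$ given $v'$, then $S$ given $(v,v')$. Throughout, the diffusion coefficients are replaced by their positive-part versions $\xi_i\bigl(x^+\bigr)^{\delta_i}$. Once nonnegativity of the solutions is established, the modified system coincides with \eqref{eq:gdmr}.

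\textbf{Step 1: the external factor $v'$.}
The equation $\dd v'_t=\kappa_2(\theta-v'_t)\,\dd t+\xi_2\bigl((v'_t)^+\bigr)^{\delta_2}\,\dd W^{(3)}_t$ is one-dimensional. Its drift is Lipschitz, and its diffusion coefficient satisfies the Yamada--Watanabe modulus condition with $\varrho(u)=\xi_2u^{\delta_2}$, as verified before the theorem. Since $\delta_2\le1$, both coefficients have at most linear growth. Continuity and linear growth give weak existence by Skorokhod's theorem, and Yamada--Watanabe gives pathwise uniqueness. Together these yield a unique strong solution.

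Nonnegativity is shown with the Yamada--Watanabe functions $\phi_k$, which approximate $x\mapsto x^-$, applied to $v'$. On $\{v'<0\}$ the diffusion coefficient vanishes, so the local-time-type term disappears. The drift there is $\kappa_2(\theta-v')>0$, which pushes upward. A Gronwall argument then gives $\E[(v'_t)^-]=0$.

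\textbf{Step 2: the internal factor $v$, with $v'$ fixed.}
With $v'$ already constructed, the equation for $v$ has a random drift $\kappa_1(v'_t-v_t)$. This drift is Lipschitz in $v$ with constant $\kappa_1$, uniformly in $\omega$, and adapted. The diffusion coefficient is $\xi_1(v^+)^{\delta_1}$ driven by $W^{(2)}$.

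Pathwise uniqueness comes from the same Yamada--Watanabe computation. Apply $\phi_k$ to the difference of two solutions driven by the same $(W^{(2)},v')$. The common term $v'$ cancels, so only the Lipschitz part $-\kappa_1(v-\tilde v)$ survives.

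For existence, I would avoid a multidimensional weak-existence argument. Instead, solve the joint two-dimensional system $(v,v')$ with Brownian motion $(W^{(2)},W^{(3)})$, written via a Cholesky factor using $|\rho_{23}|<1$. Weak existence follows from continuous, linear-growth coefficients. Pathwise uniqueness holds for the pair because each coordinate's diffusion depends only on its own component, so the componentwise Yamada--Watanabe argument applies: first to $v'$, then to $v$. The Yamada--Watanabe theorem then gives a unique strong solution.

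Nonnegativity of $v$ follows as in Step 1. On $\{v<0\}$ the diffusion vanishes and the drift $\kappa_1(v'_t-v_t)$ is at least $-\kappa_1 v_t>0$, since $v'\ge0$.

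\textbf{Step 3: the asset $S$.}
Given continuous nonnegative $v$, define
$$S_t=S_0\exp\Bigl(rt-\tfrac12\int_0^t v_s\,\dd s+\int_0^t\sqrt{v_s}\,\dd W^{(1)}_s\Bigr).$$
This is well defined because $\int_0^T v_s\,\dd s<\infty$ almost surely. Itô's formula shows it solves the $S$-equation. It is strictly positive, and it is unique, since the equation is linear in $S$: compare any solution with $S$ via Itô applied to the ratio.

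\textbf{Step 4: the Markov property.}
The coefficients are time-independent and the solution is pathwise unique and strong. The standard argument then applies: a measurable solution map, continuous dependence in law on the initial state via Feller continuity from uniqueness in law, and time-shift of Brownian motion. This gives the time-homogeneous strong Markov property on $\mathcal D$, which the steps above show is invariant.

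\textbf{Main obstacle.}
The delicate step is Step 2: justifying pathwise uniqueness and strong existence for the coupled non-Lipschitz pair with correlated noise. I would handle it by the componentwise Yamada--Watanabe argument described above, which works precisely because each diffusion coefficient depends only on its own coordinate. If needed, I would cite Mishura--Pilipenko--Ralchenko~\cite{mishura2025gatheral} for strictly positive initial values and extend to the boundary values $v_0=0$ or $v'_0=0$ by the same argument.
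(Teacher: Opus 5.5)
Your proposal is correct. Its overall architecture matches the paper's proof: positive-part extension, then $v'$, then $v$, then $S$ as a stochastic exponential, then the Markov property from well-posedness.

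\textbf{Strong existence for $v$.} This is where the routes genuinely differ.
\begin{itemize}
\item The paper treats $v'$ as a given adapted input. It localizes with $\tau_m=\inf\{t:v'_t>m\}\wedge T$, so that the random drift $\kappa_1(v'_t-x)$ has linear growth with a deterministic constant. It then invokes a one-dimensional Yamada--Watanabe theorem for SDEs with progressive random coefficients, and pastes the stopped solutions together.
\item You instead lift to the autonomous two-dimensional system for $(v,v')$. Continuous linear-growth coefficients give weak existence. Pathwise uniqueness holds componentwise: first for the autonomous $v'$-equation, then for the $v$-equation, where the common $v'$ cancels in the drift difference. The classical multidimensional Yamada--Watanabe theorem then yields a strong solution, whose $v'$-component agrees with Step 1 by pathwise uniqueness.
\end{itemize}
Your route uses only textbook results for Markovian SDEs and needs no localization. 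It also makes $(v,v')$ a well-posed Markovian pair, which makes Step 4 more transparent. The paper's route keeps the ``$v$ given a $v'$-path'' viewpoint that motivates the hybrid method. However, it rests on a random-coefficient version of Yamada--Watanabe that is asserted rather than referenced.

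\textbf{Nonnegativity.} The paper appeals to a one-dimensional stochastic invariance criterion, which is usually stated for Markovian coefficients. Your explicit It\^o computation with $\phi_k\approx x^-$ works verbatim for the random drift in the $v$-equation. In fact it is simpler than you state. The derivatives $\phi_k'$ and $\phi_k''$ are supported where the diffusion coefficient vanishes, and there $\phi_k'\le0$ while the drift is positive. Hence $\phi_k(v'_t)\le\phi_k(v'_0)=0$ pathwise, with no martingale term, and no Gronwall step is needed.

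\textbf{Two small presentation points.}
\begin{itemize}
\item Your sentence ``I would avoid a multidimensional weak-existence argument'' contradicts what you then do. Two-dimensional weak existence is exactly what you use, and it is fine. You presumably mean avoiding weak existence for a one-dimensional equation with random coefficients.
\item In Step 4, Feller continuity is not needed. Uniqueness in law for every initial point, together with measurability of the map from initial state to law, already gives the strong Markov property. If you do want Feller continuity for the full system, note that $S\sqrt{v}$ is not of linear growth. You should then argue in the coordinates $(\log S,v,v')$, where the coefficients are continuous with linear growth.
\end{itemize}
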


Here, the boundary cases $v_0=0$ and $v'_0=0$ are included only to show that the variance system is well defined on the closed nonnegative domain and cannot leave it. They are not used as a modelling choice for calibration or numerical pricing. In the numerical part of the paper, the initial variance factors are strictly positive.

The proof is given in Appendix~\ref{app:proof-wellposed}. It follows the triangular structure of the model.
First, the $v'$-equation is solved, then the $v$-equation is solved with $v'$ as
an adapted input, and finally, the asset equation is recovered from its
stochastic exponential. The same one-way structure is used later in
Subsection~\ref{sec:hybrid-overview} and Proposition~\ref{prop:conditional-gaussian-step}, where the Hybrid LSMC--PDE step conditions on the Brownian increments driving the variance factors.

The lower bound for CEV exponents, $\delta_i\ge1/2$,  is the Yamada-Watanabe threshold for
pathwise uniqueness of power diffusion coefficients. We do not impose a
Feller-type condition, since the theorem proves nonnegativity rather than strict
positivity or boundary inaccessibility.

\section{Bermudan option pricing}\label{sec:conditional_pde}

We now pass from the model dynamics to the one-step continuation calculation in
the Bermudan recursion. The purpose of this section is to explain how the Hybrid LSMC--PDE method is used for the GDMR model and why the Brownian projection is needed.
The Hybrid LSMC--PDE method was developed by Farahany, Jackson, and Jaimungal~\cite{farahany2020mixing} for multidimensional stochastic volatility models. In this paper, our contribution is to adapt that general framework to the GDMR model by deriving the Brownian projection and the resulting one-dimensional conditional log-price representation. We use the
general hybrid-regression convergence theorem in \cite{farahany2020mixing}; the representation derived here provides the
model-specific input needed to apply that theorem.

\subsection{Bermudan stopping problem}\label{sec:bermudan-recursion}

Let
$$
\pi=\{0=t_0<t_1<\cdots<t_M=T\}
$$
be the Bermudan exercise grid. At $t_n$, the payoff is $h_n(S_{t_n})$, where
$h_n:(0,\infty)\to\R$ is Borel. We assume
$$
\E\!\left[
\max_{0\le n\le M}e^{-rt_n}|h_n(S_{t_n})|
\right]<\infty .
$$

We focus on Bermudan put options, with payoff $h_n(s)=(K-s)^+$, for illustrative purposes, since the early exercise feature is generally nontrivial.  Under a nonnegative risk-free interest rate and a non-dividend-paying underlying asset, early exercise of call options is not optimal. Nevertheless, the methodology and discussion below can be readily adapted to Bermudan or American call options in settings where early exercise is possible, such as in the presence of dividends or negative interest rates.

By Theorem~\ref{thm:wellposed}, the state process
$(S_t,v_t,v'_t)$ is a time-homogeneous Markov process on
$$
\mathcal D=(0,\infty)\times[0,\infty)^2 .
$$
Thus the Bermudan value functions may be defined by backward induction. Write
$$
\Delta t_n:=t_{n+1}-t_n
$$
and set the value function at maturity as
$$
U_M(s,v,w):=h_M(s),
$$
where $w$ denotes the second variance coordinate. For $n=M-1,\ldots,0$, define
$$
\begin{aligned}
C_n(s,v,w)
:=
e^{-r\Delta t_n}
\E\!\left[
U_{n+1}(S_{t_{n+1}},v_{t_{n+1}},v'_{t_{n+1}})
\,\middle|\,
S_{t_n}=s,\ v_{t_n}=v,\ v'_{t_n}=w
\right],
\end{aligned}
$$
and following the dynamic programming principle, the value function at time $t_n$ prior to maturity is
$$
U_n(s,v,w):=\max\{h_n(s),C_n(s,v,w)\}.
$$
The time-zero Bermudan value is
$$
V_0=U_0(S_0,v_0,v'_0).
$$

\subsection{Hybrid decomposition of the continuation value}\label{sec:hybrid-overview}

A plain LSMC method approximates the continuation value in the Bermudan recursion
by regressing simulated discounted next-step values on the current state. In a
stochastic-volatility model this regression is more demanding than in a one-factor
asset model, because the continuation value depends on both the asset level and the
variance state. The Hybrid LSMC--PDE method reduces this regression problem by
using the one-way coupling structure of the dynamics. The variance factors are simulated
first; along each simulated variance path, the remaining conditional expectation in
the asset direction is computed by a one-dimensional PDE step; the resulting
pathwise conditional values are then regressed over the current variance state. The
numerical realization of these steps, including the FFT evaluation and the regression
matrices, is given in Section~\ref{sec:hybrid-method}.

The idea can be written directly at the level of one Bermudan interval. Let $F$ be
a next-step value function; in the backward recursion, $F$ is $U_{n+1}$ or its
numerical approximation. For a current state $(s,v,w)$, define
$$
C_n^F(s,v,w)
=
e^{-r\Delta t_n}
\E\!\left[
F(S_{t_{n+1}},v_{t_{n+1}},v'_{t_{n+1}})
\,\middle|\,
S_{t_n}=s,\ v_{t_n}=v,\ v'_{t_n}=w
\right].
$$
The hybrid method evaluates this conditional expectation in two stages, using
the one-way coupling of the model. The equations for $(v,v')$ do not involve
$S$, so over one exercise interval we may first simulate the Brownian increments
driving the variance factors. Conditioning on these increments fixes the
variance segment and the part of the asset Brownian motion correlated with the
variance drivers. The only remaining asset randomness is the orthogonal
Brownian component. This conditioning is used only inside the one-step
conditional expectation and does not change the information used for the
exercise decision at the exercise dates.

Let $\mathcal H_n$ denote this enlarged conditioning information; its precise
definition is given in the next subsection. The inner, pathwise continuation value is

\begin{equation}\label{eq:innercontval}
    \bar C_n^F(s;\mathcal H_n):= e^{-r\Delta t_n} \E\!\left[ F(S_{t_{n+1}},v_{t_{n+1}},v'_{t_{n+1}})
\,\middle|\, S_{t_n}=s,\mathcal H_n \right].
\end{equation}

For each simulated variance path segment, $\bar C_n^F(\cdot;\mathcal H_n)$ is a
function of the initial asset level. The original continuation value is recovered by the outer conditional expectation
$$
C_n^F(s,v,w) = \E\!\left[ \bar C_n^F(s;\mathcal H_n) \,\middle|\, v_{t_n}=v,\ v'_{t_n}=w
\right].
$$
This expectation is the part approximated by least-squares regression over the current variance state.

To compute the continuation value, we must isolate the part of the asset Brownian motion that remains random after conditioning on $\mathcal H_n$. This is the role of the Brownian projection below. Once this residual component is identified, the conditional asset-price problem becomes one-dimensional in the log-price variable. The conditioning is only an iterated-expectation device and does not change the exercise information at time $t_n$.

\subsection{Brownian projection}

We now make the conditioning information in the preceding decomposition explicit.
Fix an interval $[t_n,t_{n+1}]$ in the Bermudan exercise grid and define
$$
\mathcal H_n := \F_{t_n}
\vee \sigma\!\left( W_s^{(2)}-W_{t_n}^{(2)},
W_s^{(3)}-W_{t_n}^{(3)}: s\in[t_n,t_{n+1}]
\right), 
$$
where $\F_1 \vee \F_2 := \sigma(\F_1 \cup \F_2)$.

Since $v$ and $v'$ are driven only by $W^{(2)}$ and $W^{(3)}$, the path
$(v_s,v'_s)_{s\in[t_n,t_{n+1}]}$ is $\mathcal H_n$-measurable. By assumption $|\rho_{23}|\neq1$, we may project
$W^{(1)}$ onto the span of $(W^{(2)},W^{(3)})$. Set
$$
\Sigma_{23}:=
\begin{bmatrix}
1 & \rho_{23}\\
\rho_{23} & 1
\end{bmatrix},
\qquad
\beta:=\Sigma_{23}^{-1}
\begin{bmatrix}
\rho_{12}\\
\rho_{13}
\end{bmatrix}
=
\begin{bmatrix}
\beta_2\\
\beta_3
\end{bmatrix}.
$$
Equivalently,
\begin{equation}\label{eq:proj-coeffs}
\beta_2=\frac{\rho_{12}-\rho_{13}\rho_{23}}{1-\rho_{23}^2},
\qquad
\beta_3=\frac{\rho_{13}-\rho_{12}\rho_{23}}{1-\rho_{23}^2}.
\end{equation}
The residual variance is
\begin{equation}\label{eq:sigperp}
\begin{aligned}
\sigma_\perp^2
&:=
1-
\begin{bmatrix}\rho_{12}&\rho_{13}\end{bmatrix}
\Sigma_{23}^{-1}
\begin{bmatrix}\rho_{12}\\ \rho_{13}\end{bmatrix} \\
&=
\frac{
1-\rho_{12}^2-\rho_{13}^2-\rho_{23}^2
+2\rho_{12}\rho_{13}\rho_{23}
}{1-\rho_{23}^2}
\ge0 .
\end{aligned}
\end{equation}
When $\sigma_\perp>0$,
$$
B_t^\perp
:=
\frac{W_t^{(1)}-\beta_2W_t^{(2)}-\beta_3W_t^{(3)}}{\sigma_\perp}
$$
is a Brownian motion, and its increments over $[t_n,t_{n+1}]$ are independent
of $\mathcal H_n$. Hence
\begin{equation}\label{eq:orth}
\dd W_t^{(1)}
=\beta_2\,\dd W_t^{(2)}
+\beta_3\,\dd W_t^{(3)}
+\sigma_\perp\,\dd B_t^\perp.
\end{equation}
If $\sigma_\perp=0$, the residual term is absent and the asset Brownian driver
is fully determined by the Brownian drivers of the variance factors.

\subsection{Conditional one-dimensional equation}

Let $X_t:=\log S_t$. Substituting Eq. \eqref{eq:orth} into the asset price equation for $S_t$ in Eq. (\ref{eq:gdmr}) gives
$$
\dd X_t
=\Bigl(r-\tfrac12v_t\Bigr)\dd t
+\sqrt{v_t}\Bigl(\beta_2\,\dd W_t^{(2)}
+\beta_3\,\dd W_t^{(3)}\Bigr)
+\sigma_\perp\sqrt{v_t}\,\dd B_t^\perp .
$$
Define the $\mathcal H_n$-measurable shift
$$
Z_n:=
\int_{t_n}^{t_{n+1}}
\sqrt{v_s}\Bigl(\beta_2\,\dd W_s^{(2)}+
\beta_3\,\dd W_s^{(3)}\Bigr).
$$
For $t\in[t_n,t_{n+1}]$, define also
$$
I_n(t):=\int_t^{t_{n+1}}\Bigl(r-\tfrac12v_s\Bigr)\dd s,
\qquad
B_n(t):=\sigma_\perp^2\int_t^{t_{n+1}}v_s\,\dd s,
$$
and write $I_n:=I_n(t_n)$ and $B_n:=B_n(t_n)$.

\begin{proposition}[Conditional one-step representation]\label{prop:conditional-gaussian-step}
Let $G:\R\times[0,\infty)^2\to\R$ be bounded and Borel. Conditionally on
$\mathcal H_n$, the function
$$
u(t,y)
:=
\E\!\left[
G\!\left(
Y_{t_{n+1}}^{t,y}+Z_n,
v_{t_{n+1}},
v'_{t_{n+1}}
\right)
\,\middle|\,
\mathcal H_n
\right],
$$
where
$$
\dd Y_s^{t,y}
=\Bigl(r-\tfrac12v_s\Bigr)\dd s
+\sigma_\perp\sqrt{v_s}\,\dd B_s^\perp,
\qquad
Y_t^{t,y}=y,
$$
is the bounded mild solution of
$$
\partial_tu+
\Bigl(r-\tfrac12v_t\Bigr)\partial_yu+
\frac12\sigma_\perp^2v_t\partial_{yy}u=0,
\qquad
u(t_{n+1},y)=G(y+Z_n,v_{t_{n+1}},v'_{t_{n+1}}).
$$
Equivalently, for an auxiliary $\xi\sim N(0,1)$,
\begin{equation}\label{eq:gaussian_rep}
u(t,y)
=
\E_\xi\!\left[
G\!\left(
y+Z_n+I_n(t)+\sqrt{B_n(t)}\,\xi,
v_{t_{n+1}},
v'_{t_{n+1}}
\right)
\right],
\end{equation}
with the deterministic interpretation when $B_n(t)=0$.
\end{proposition}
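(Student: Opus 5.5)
The plan is to prove the Gaussian representation \eqref{eq:gaussian_rep} first, and then check that it is the bounded mild solution of the stated backward equation. The key observation is that, conditionally on $\mathcal H_n$, the coefficients of the $Y$-equation are deterministic. The path $(v_s,v'_s)_{s\in[t_n,t_{n+1}]}$ and the shift $Z_n$ are $\mathcal H_n$-measurable, because $v,v'$ are strong solutions driven by $W^{(2)},W^{(3)}$ (Theorem~\ref{thm:wellposed}). The increments of $B^\perp$ on $[t_n,t_{n+1}]$ are independent of $\mathcal H_n$, as noted after \eqref{eq:orth}.

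Concretely, I would first write the $Y$-equation in integrated form:
$$
Y^{t,y}_{t_{n+1}}=y+I_n(t)+\sigma_\perp\int_t^{t_{n+1}}\sqrt{v_s}\,\dd B^\perp_s .
$$
Next I would show that, conditionally on $\mathcal H_n$, the stochastic integral is $N(0,B_n(t))$. This is a Wiener integral of an $\mathcal H_n$-measurable integrand against a Brownian motion independent of $\mathcal H_n$.

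To make this rigorous, I would use the conditional characteristic function. By the freezing (independence) lemma, for fixed $\lambda$ I would compute $\E[\exp(i\lambda\sigma_\perp\int_t^{t_{n+1}}\sqrt{v_s}\,\dd B^\perp_s)\mid\mathcal H_n]$. I would enlarge the filtration to $\mathcal G_s:=\mathcal H_n\vee\sigma(B^\perp_r-B^\perp_{t_n}:r\le s)$. In this filtration $B^\perp$ is still a Brownian motion, by independence, and $\sqrt{v_s}$ is $\mathcal G_{t_n}$-measurable, hence predictable. The complex exponential martingale $\exp(i\lambda M_s+\tfrac12\lambda^2\langle M\rangle_s)$ is bounded, since $\langle M\rangle$ is bounded by $\sigma_\perp^2\int v_s\,\dd s$, which is finite pathwise; I would localize if needed. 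This gives the characteristic function $\exp(-\tfrac12\lambda^2B_n(t))$. Since $G$ is bounded and Borel, and $Z_n,I_n(t),B_n(t),v_{t_{n+1}},v'_{t_{n+1}}$ are $\mathcal H_n$-measurable, the freezing lemma then yields \eqref{eq:gaussian_rep}. When $B_n(t)=0$ (for instance $\sigma_\perp=0$ or $v\equiv0$ on the interval) the integral vanishes a.s., which is the deterministic interpretation.

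For the PDE statement, I would fix a realization of $\mathcal H_n$, so that $a(s):=r-\tfrac12v_s$ and $b(s):=\sigma_\perp^2v_s$ are continuous deterministic functions. The heat-type operator $\partial_t+a(t)\partial_y+\tfrac12 b(t)\partial_{yy}$ has the explicit Gaussian transition kernel with mean $y+\int_t^{t_{n+1}}a$ and variance $\int_t^{t_{n+1}}b$. The bounded mild solution with terminal datum $g(y)=G(y+Z_n,\cdot)$ is by definition the convolution of $g$ with this kernel, i.e.\ $P_{t,t_{n+1}}g$. This is exactly \eqref{eq:gaussian_rep}. I would also check the evolution property $P_{t,r}P_{r,t_{n+1}}=P_{t,t_{n+1}}$, which holds because Gaussian means and variances add. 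Uniqueness within bounded functions follows from this semigroup characterization. When $B_n(t)>0$ and $G$ is smooth enough, a direct differentiation under the integral confirms that the equation holds classically.

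The main obstacle will be the conditioning step: justifying that conditioning on the whole variance-driver path leaves $B^\perp$ a Brownian motion with a deterministic-in-$\mathcal H_n$ integrand. I would handle it through the initial enlargement $\mathcal G$ and the independence of $B^\perp$ increments from $\mathcal H_n$. This independence itself follows from joint Gaussianity, since the covariations of $B^\perp$ with $W^{(2)},W^{(3)}$ are zero by the choice of $\beta$ in \eqref{eq:proj-coeffs}. I would cite that independence rather than reprove it.
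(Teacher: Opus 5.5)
Your proposal is correct and follows the paper's route. Conditioning on $\mathcal H_n$ freezes the variance path, $Z_n$, $I_n(t)$ and $B_n(t)$; the independence of the $B^\perp$ increments from $\mathcal H_n$ then makes $Y^{t,y}_{t_{n+1}}$ conditionally $N(y+I_n(t),B_n(t))$, and the resulting Gaussian convolution is the (Feynman--Kac) mild solution. You make rigorous, via the enlarged filtration and the conditional characteristic function, what the paper only asserts. One slip: the exponential martingale is not bounded just because $B_n(t)$ is finite pathwise, but localizing on the $\mathcal G_{t_n}$-measurable events $\{B_n(t)\le k\}$ and letting $k\to\infty$ fixes this.
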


\begin{proof} After conditioning on $\mathcal H_n$, the variance
path and the shift $Z_n$ are fixed. If $\sigma_\perp>0$, the only remaining
randomness is the residual Brownian motion; if $\sigma_\perp=0$, the
conditional step is deterministic. The conditional
Feynman-Kac formula gives the backward Kolmogorov equation. Since the
coefficients are independent of $y$, the terminal log-price is Gaussian with
mean $y+I_n(t)$ and variance $B_n(t)$, which gives the displayed formula.
\end{proof}

For an asset-space next-step value function $F$ and
$G(y,v,w):=F(e^y,v,w)$, the pathwise quantity defined by Eq. \eqref{eq:innercontval} in
Subsection~\ref{sec:hybrid-overview} is
$$
\bar C_n^F(s;\mathcal H_n)=e^{-r\Delta t_n}u(t_n,\log s).
$$
Thus the conditional asset step depends on the sampled
$(W^{(2)},W^{(3)})$-path over $[t_n,t_{n+1}]$ only through
$$
Z_n,\qquad I_n,\qquad B_n,\qquad v_{t_{n+1}},\qquad v'_{t_{n+1}},
$$
while $(v_{t_n},v'_{t_n})$ is the state used in the outer regression.

\subsection{Connection with the hybrid convergence theorem}\label{sec:as_convergence}

The almost-sure convergence theorem for the Hybrid LSMC--PDE regression is proved in Farahany, Jackson, and Jaimungal~\cite{farahany2020mixing}. The preceding subsections identify the model-specific ingredients needed to apply that theorem.

First, the asset equation is multiplicative. For $0\le t<u\le T$,
$$
S_u=S_tR_{t,u},
$$
where
$$
R_{t,u}
=
\exp\left(
\int_t^u\left(r-\frac12v_s\right)\,\dd s
+
\int_t^u\sqrt{v_s}\,\dd W_s^{(1)}
\right).
$$
Thus the return factor $R_{t,u}$ depends on the Brownian and variance paths,
but not on the value of $S_t$.

Second, Proposition~\ref{prop:conditional-gaussian-step} shows that, on each
Bermudan interval, the sampled $(W^{(2)},W^{(3)})$-path enters the conditional
asset-price step only through the finite-dimensional statistic
$$
\Theta_n
:=
(V_n,V_{n+1},Z_n,I_n,B_n),
\qquad
V_n=(v_{t_n},v'_{t_n}).
$$
Because the variance equations do not involve $S$, the model is one-way
coupled. Hence we can apply the hybrid convergence theorem of
\cite[Theorem~1]{farahany2020mixing}.

The remaining assumptions in the cited convergence theorem are stabilizing assumptions on the least-squares regression. In the numerical implementation below, these are enforced by working on the compact volatility rectangle $\mathcal D_v$ in Eq. \eqref{eq:vol-domain}, using a bounded volatility basis on that rectangle as in Eq. \eqref{eq:vol-basis}, and replacing the sample Gram inverse by the bounded inverse $[A_n^N]^{-1}_R$ in Eq. \eqref{eq:estimated_coeff}. The finite log-price grid $\mathcal S_h$ provides the corresponding compact asset domain
for the conditional PDE step. Thus, the convergence theorem is applied to the same stabilized regression used in Section~\ref{sec:hybrid-method}, rather than to a separate truncated algorithm.

Under these stabilized choices, and assuming the population Gram matrices are
nonsingular, the hybrid regression coefficients converge almost surely by
\cite[Theorem~1]{farahany2020mixing}. Consequently, for each fixed asset level,
the fitted continuation values converge uniformly on the chosen compact
volatility domain.
\section{Numerical implementation}
\label{sec:hybrid-method}

The numerical experiments use the following implementation. The conditional
representation is derived in Section~\ref{sec:conditional_pde}; here we specify the grids, variance discretization, accumulated path statistics, FFT step, regression step, and time-zero estimator. We use this implementation because the conditional PDE admits a Gaussian structure in its solution, allowing the shift operation in the conditional formulation to be handled efficiently in Fourier space. In addition, the FFT implementation appears computationally advantageous in our numerical experiments.


\subsection{Grids and basis functions}
Let
$$
\pi=\{0=t_0<t_1<\cdots<t_M=T\}
$$
be the Bermudan exercise grid, and let $h_n$ denote the exercise payoff at
$t_n$. We work on a uniform log-price grid
$$
y_i=y_{\min}+(i-1)\Delta y,
\qquad
\Delta y=\frac{y_{\max}-y_{\min}}{N_S-1},
\qquad
i=1,\ldots,N_S,
$$
and set
$$
s_i=e^{y_i},
\qquad
\mathcal S_h:=\{s_1,\ldots,s_{N_S}\}.
$$
The interval $[y_{\min},y_{\max}]$ is chosen large enough to contain the
relevant asset values used in the computation.

For the volatility variables, write $w=v'$. We work on the compact volatility rectangle
\begin{equation}\label{eq:vol-domain}
\mathcal D_v=[0,v_{\max}]\times[0,w_{\max}]\subset[0,\infty)^2 .
\end{equation}
The regression basis is chosen as a vector of continuous functions that are bounded on this rectangle,
\begin{equation}\label{eq:vol-basis}
\phi=(\phi_1,\ldots,\phi_{d_B}), \qquad \sup_{(v,w)\in\mathcal D_v}|\phi_k(v,w)|<\infty,
\qquad k=1,\ldots,d_B.
\end{equation}
For the convergence statement, the numerical basis may be viewed as the restriction to $\mathcal D_v$ of compactly supported continuous functions on the full volatility state space; this extension does not affect the computed regression, since the regression is evaluated only on $\mathcal D_v$.

Once chosen, the domain $\mathcal D_v$ and the basis functions are fixed before
the training regression.

At each exercise date, we approximate the continuation value by a separable
regression form
$$
c_n(s,v,w)\approx a_n(s)\cdot\phi(v,w),
\qquad
a_n(s)\in\R^{d_B}.
$$
Numerically, the coefficient vectors $a_n(s)$ are computed only at the grid points
$s_i\in\mathcal S_h$. Off-grid asset values are evaluated by interpolation in the
log-price coordinate.

\subsection{Variance simulation and path statistics}
Let
$$
0=\tau_0<\tau_1<\cdots<\tau_L=T
$$
be an Euler simulation grid containing all exercise dates in the Bermudan exercise grid $\pi$. Write
$$
\Delta\tau_\ell:=\tau_{\ell+1}-\tau_\ell.
$$
For each path $j$, simulate correlated Gaussian increments
$$
(\Delta W_{\ell}^{(2),j},\Delta W_{\ell}^{(3),j})
$$
with variances $\Delta\tau_\ell$ and covariance
$\rho_{23}\Delta\tau_\ell$. To simulate the variance processes, we adopt a standard positivity-preserving truncated Euler scheme of the form:
$$
\begin{aligned}
\bar w_\ell^j&:=(w_\ell^j)^+,
\qquad
\bar v_\ell^j:=(v_\ell^j)^+,\\
w_{\ell+1}^j
&:=
\left[
w_\ell^j+\kappa_2(\theta-\bar w_\ell^j)\Delta\tau_\ell
+\xi_2(\bar w_\ell^j)^{\delta_2}\Delta W_{\ell}^{(3),j}
\right]^+,\\
v_{\ell+1}^j
&:=
\left[
v_\ell^j+\kappa_1(\bar w_\ell^j-\bar v_\ell^j)\Delta\tau_\ell
+\xi_1(\bar v_\ell^j)^{\delta_1}\Delta W_{\ell}^{(2),j}
\right]^+.
\end{aligned}
$$
For each exercise interval $[t_n,t_{n+1}]$, the path statistics are accumulated
over the Euler substeps contained in that interval:
$$
\begin{aligned}
Z_n^j
&:=
\sum_{\tau_\ell\in[t_n,t_{n+1})}
\sqrt{\bar v_\ell^j}
\left(
\beta_2\Delta W_{\ell}^{(2),j}
+
\beta_3\Delta W_{\ell}^{(3),j}
\right),\\
I_n^j
&:=
\sum_{\tau_\ell\in[t_n,t_{n+1})}
\left(r-\frac12\bar v_\ell^j\right)\Delta\tau_\ell,\\
B_n^j
&:=
\sigma_\perp^2
\sum_{\tau_\ell\in[t_n,t_{n+1})}
\bar v_\ell^j\Delta\tau_\ell .
\end{aligned}
$$
Here the projection coefficients $\beta_2,\beta_3$ and the residual volatility coefficient $\sigma_\perp$ are given by Eqs.~\eqref{eq:proj-coeffs} and~\eqref{eq:sigperp}, respectively. The left-point rule is natural for $Z_n^j$, since $Z_n^j$ approximates a stochastic integral evaluated with left endpoints. The same left-point values are used for $I_n^j$ and $B_n^j$ for consistency with the Euler  simulation, and the same increments $\Delta W_{\ell}^{(2),j},\Delta W_{\ell}^{(3),j}$ are used in the variance updates and in $Z_n^j$.

\subsection{Pathwise FFT propagation and regression}
Fix $n\in\{0,\ldots,M-1\}$. Here $\widehat V_{n+1}^N$ denotes the numerical
Bermudan value surface already available from the next exercise date in the
backward recursion. At maturity this surface is initialized by the payoff function
$$
\widehat V_M^N(s_i,v,w)=h_M(s_i).
$$
At earlier exercise dates it is the previously updated surface
$$
\begin{aligned}
\widehat V_{n+1}^N(s_i,v,w)
&=
\max\{h_{n+1}(s_i),\widehat C_{n+1}^N(s_i,v,w)\},\\
\widehat C_{n+1}^N(s_i,v,w)
&=
a_{n+1}^N(s_i)\cdot\phi(v,w).
\end{aligned}
$$
Thus $\widehat V_{n+1}^N$ is already available before the following pathwise
propagation step. For each Monte Carlo path $j$, the simulated Brownian
increments driving the variance factors on $[t_n,t_{n+1}]$ determine the
endpoint state
$$
(v_{t_{n+1}}^j,w_{t_{n+1}}^j)
$$
and the statistics $Z_n^j,I_n^j,B_n^j$. Define the terminal grid function
$$
g_{n+1}^j(y_i)
:=
\widehat V_{n+1}^N
\left(e^{y_i},v_{t_{n+1}}^j,w_{t_{n+1}}^j\right),
\qquad i=1,\ldots,N_S.
$$
By the Gaussian representation Eq. (\ref{eq:gaussian_rep}) in Section~\ref{sec:conditional_pde},
$$
u_n^j(y)
=
\E_\xi\!\left[
g_{n+1}^j
\left(
y+Z_n^j+I_n^j+\sqrt{B_n^j}\xi
\right)
\right],
\qquad \xi\sim N(0,1).
$$
With the Fourier convention
$$
\mathcal F f(\omega)
=
\int_{\R}e^{-\mathrm i\omega y}f(y)\,\dd y,
$$
we obtain
$$
\mathcal F u_n^j(\omega)
=
\mathcal F g_{n+1}^j(\omega)
\exp\left(
\mathrm i\omega Z_n^j+\mathrm i\omega I_n^j-\frac12\omega^2B_n^j
\right).
$$
Define
$$
\Psi_n^j(\omega)
:=
\mathrm i\omega Z_n^j+\mathrm i\omega I_n^j-\frac12\omega^2B_n^j.
$$
On the uniform log-price grid, the FFT implementation is
\begin{equation}
u_n^j(\cdot)
\approx
\operatorname{FFT}^{-1}
\left(
\operatorname{FFT}[g_{n+1}^j]\,
\exp(\Psi_n^j)
\right),
\label{eq:fft-recursion}
\end{equation}
where the exponential is applied pointwise at the discrete Fourier frequencies.
The FFT input is the unshifted terminal grid $g_{n+1}^j$; the correlated shift
$Z_n^j$ is included only through the multiplier. Equivalently, one may shift the
terminal grid first and omit $\mathrm i\omega Z_n^j$, but the two conventions
must not be combined. If $B_n^j=0$, the conditional propagation reduces to the
deterministic log-price shift $Z_n^j+I_n^j$.

The pathwise pre-surface on the asset grid is
$$
\widehat C_n^j(s_i)
:=
e^{-r\Delta t_n}u_n^j(y_i),
\qquad i=1,\ldots,N_S.
$$

The regression step is performed at each exercise date $t_n$,
$n=M-1,\ldots,1$, and at each asset grid point $s_i$. It estimates the
dependence of the pathwise pre-surface $\widehat C_n^j(s_i)$ on the current
volatility state. Define
$$
V_n^j:=(v_{t_n}^j,w_{t_n}^j).
$$
The sample Gram matrix is
$$
A_n^N
:=
\frac1N\sum_{j=1}^N
\phi(V_n^j)\phi(V_n^j)^\top,
$$
and the right-hand side for the grid point $s_i$ is
$$
b_n^N(s_i)
:=
\frac1N\sum_{j=1}^N
\phi(V_n^j)\widehat C_n^j(s_i).
$$
For numerical stability, we use the truncated inverse
\begin{equation}\label{eq:estimated_coeff}
[A]^{-1}_R
:=
\begin{cases}
A^{-1}, & \text{if $A$ is invertible and }\|A^{-1}\|\le R,\\
0,  & \text{otherwise.}
\end{cases}
\end{equation}
Then set
\begin{equation}
a_n^N(s_i):=[A_n^N]^{-1}_R b_n^N(s_i).
\label{eq:lsq}
\end{equation}
The fitted continuation surface is
$$
\widehat C_n^N(s_i,v,w)
:=
a_n^N(s_i)\cdot\phi(v,w),
$$
and the updated price for the Bermudan option is
$$
\widehat V_n^N(s_i,v,w)
:=
\max\left\{
h_n(s_i),
\widehat C_n^N(s_i,v,w)
\right\}.
$$
Starting from
$$
\widehat V_M^N(s_i,v,w):=h_M(s_i),
$$
the recursion is performed backward for $n=M-1,\ldots,1$. The fitted functions
are defined at the asset grid points and extended to off-grid asset values by
log-price interpolation.

\subsection{Time-zero estimator and pseudocode}
At $t_0=0$, the initial volatility state is fixed, so no regression is needed. After $\widehat V_1^N$ has been fitted, the time-zero
continuation value is evaluated on the asset grid by averaging first-step
pathwise pre-surfaces:
$$
\widehat C_0^N(s_i)
:=
\frac1N
\sum_{j=1}^{N}
\widehat C_0^j(s_i).
$$
The averaged grid values define $\widehat C_0^N$ on the asset grid; off-grid
values are evaluated by the log-price interpolation convention stated above.
Thus $\widehat C_0^N(S_0)$ denotes the averaged continuation value at the
initial asset price, and the Hybrid LSMC--PDE estimator is
\begin{equation}\label{eq:hybrid_est}
\widehat V_0^N(S_0)
=
\max\left\{
h_0(S_0),
\widehat C_0^N(S_0)
\right\}.
\end{equation}
Note that the present implementation has the advantage that option prices are computed simultaneously for multiple initial stock prices on the stock-price grid, equivalently for multiple moneyness levels since the strike price $K$ is fixed, which is particularly desirable for calibration. In contrast, a plain LSMC approach typically produces the option price corresponding to a single initial stock price in each run.

The implementation of the Hybrid LSMC--PDE method is summarized in Algorithm~\ref{alg:hybrid-LSMC--PDE}.
\begin{algorithm}[H]
\caption{Hybrid LSMC--PDE recursion}\label{alg:hybrid-LSMC--PDE}
\begin{algorithmic}[1]
\Require Exercise grid $\{t_n\}_{n=0}^M$, Euler grid containing the exercise dates, log-price grid $\{y_i\}_{i=1}^{N_S}$, volatility domain $\mathcal D_v$, basis $\phi$, truncation level $R$, and path count $N$
\Ensure Hybrid LSMC--PDE estimator $\widehat V_0^N(S_0)$ and fitted continuation surfaces $\widehat C_n^N$
\State Simulate $N$ training paths of the variance factors on the Euler grid.
\State Store exercise-date volatility states and accumulate $Z_n^j$, $I_n^j$, and $B_n^j$ over each interval $[t_n,t_{n+1}]$.
\State Set $\widehat V_M^N(s_i,v,w)\gets h_M(s_i)$ on $\mathcal S_h\times\mathcal D_v$.
\For{$n=M-1,\ldots,1$}
\For{$j=1,\ldots,N$}
\State Compute $\{\widehat C_n^j(s_i)\}_{i=1}^{N_S}$ using the unshifted terminal grid $g_{n+1}^j$ in Eq. \eqref{eq:fft-recursion}, with $Z_n^j$ included only through the multiplier $\exp(\Psi_n^j)$; do not pre-shift $g_{n+1}^j$ by $Z_n^j$.
\EndFor
\For{each $s_i\in\mathcal S_h$}
\State Compute $b_n^N(s_i)$ and $a_n^N(s_i)$ by the truncated-inverse regression Eq. \eqref{eq:lsq}.
\EndFor
\State Set $\widehat C_n^N(s_i,v,w)\gets a_n^N(s_i)\cdot\phi(v,w)$ for all $s_i\in\mathcal S_h$.
\State Set $\widehat V_n^N(s_i,v,w)\gets\max\{h_n(s_i),\widehat C_n^N(s_i,v,w)\}$ for all $s_i\in\mathcal S_h$.
\EndFor
\State Compute the first-step pre-surfaces $\widehat C_0^j(s_i)$ for $j=1,\ldots,N$ using the simulated paths on $[t_0,t_1]$.
\State Set $\widehat C_0^N(s_i)\gets N^{-1}\sum_{j=1}^{N}\widehat C_0^j(s_i)$.
\State Evaluate $\widehat C_0^N(S_0)$ from the averaged continuation grid at $y=\log S_0$, using linear interpolation if needed.
\State \textbf{return} $\widehat V_0^N(S_0)=\max\{h_0(S_0),\widehat C_0^N(S_0)\}$.
\end{algorithmic}
\end{algorithm}
\section{Numerical experiments}\label{sec:numerical}

For numerical illustration, we consider Bermudan put pricing experiments using the fixed parameter set in Table~\ref{tab:parameter-setup}.  Since no closed-form solution is available for Bermudan option prices under the GDMR model, we compare the plain LSMC method~\cite{longstaff2001valuing} and the Hybrid LSMC--PDE estimator in Eq.~\eqref{eq:hybrid_est} against large-simulation LSMC reference prices using relative errors computed with respect to these reference prices. Although the reference prices are not exact solutions, they serve as numerical benchmarks throughout the paper. For simplicity, we refer to these benchmark-relative deviations as relative errors.


\subsection{Experimental setting and reference values}

Table~\ref{tab:parameter-setup} reports the fixed GDMR parameter set and
numerical settings used in the experiments.
\begin{table}[htbp]
\centering
\caption{Parameter set and numerical settings used in the pricing experiments.}
\label{tab:parameter-setup}
\begin{threeparttable}
\small
\begin{tabularx}{0.94\textwidth}{@{}l c X@{}}
\toprule
Symbol & Value & Interpretation \\
\midrule
$S_0$ & 100 & initial asset price \\
$v_0$ & 0.114 & initial instantaneous variance factor \\
$v'_0$ & 0.110 & initial secondary variance factor \\
$r$ & 0.02 & risk-free rate \\
$T$ & 1.0 & maturity in years \\
$\kappa_1$ & 5.5 & mean-reversion speed of $v_t$ toward $v'_t$ \\
$\kappa_2$ & 0.1 & mean-reversion speed of $v'_t$ toward $\theta$ \\
$\theta$ & 0.078 & long-run variance level \\
$\xi_1$ & 2.689 & volatility scale of the first variance factor \\
$\xi_2$ & 0.502 & volatility scale of the second variance factor \\
$\delta_1$ & 0.94 & CEV exponent of the first variance factor \\
$\delta_2$ & 0.94 & CEV exponent of the second variance factor \\
$\rho_{12}$ & -0.982 & correlation between the first and second Brownian drivers \\
$\rho_{13}$ & -0.727 & correlation between the first and third Brownian drivers \\
$\rho_{23}$ & 0.590 & correlation between the second and third Brownian drivers \\
\addlinespace
\multicolumn{3}{@{}l}{\textit{Fixed numerical setting}} \\
$N_{\mathrm{ex}}$ & 12 &  exercise dates after $t_0$, including maturity \\
$K$ & $\{70,80,90,100,110\}$ & strike prices used in all experiments \\
$N_S$ & 301 & asset-grid points in the conditional PDE solver \\
$(y_{\min},y_{\max})$ & $(\log (0.30S_0),\log(3.5\max \left(S_0, K\right)))$ & lower and upper log-price grid bounds \\
\botrule
\end{tabularx}
\end{threeparttable}
\end{table}

A volatility truncation quantile $q_v=0.999$ is used to choose the bounds
$v_{\max}$ and $w_{\max}$ in the compact volatility domain
$\mathcal D_v$ in Eq.~\eqref{eq:vol-domain}. Thus, $v_{\max}$ and $w_{\max}$ are chosen so that approximately $99.9\%$ of
the simulated values of $v$ and $v'$ at the exercise dates lie below the
corresponding bounds, truncating only the upper $0.1\%$ tail. The Hybrid
LSMC--PDE grid and truncation settings are kept fixed throughout all
experiments.

The mean-reversion, volatility-scale, initial-variance, and correlation
parameters are taken from the calibrated double mean-reverting specification of
Bayer, Gatheral and Karlsmark~\citep{bayer2013fast}. Their calibration assumes
$r=0$. In the present Bermudan pricing experiments we keep those variance and
correlation parameters, set $r=0.02$, and use
$\delta_1=\delta_2=0.94$ for the CEV--type generalization. Thus the numerical
section is not a new calibration study; it uses a parameter set from the double
mean-reverting volatility literature as a fixed test case.
The correlation assumptions used in Section~\ref{sec:conditional_pde} are
satisfied. In particular, $|\rho_{23}|<1$ and $\rho_{1,2}, \rho_{1,3} \in [-1,0]$.

The plain LSMC regressions use a cubic $16$-term basis
$$
1,x,y,z,x^2,y^2,z^2,xy,xz,yz,x^3,y^3,z^3,p,p^2,p^3,
$$
where $x=S/S_0$, $y=v/\theta$, $z=v'/\theta$, and
$p=(K-S)^+/K$. The hybrid regressions use a volatility-only cubic basis
$$
1,\tilde{y},\tilde{z},\tilde{y}^2,\tilde{y}\tilde{z},\tilde{z}^2,\tilde{y}^3,\tilde{y}^2\tilde{z},\tilde{y}\tilde{z}^2,\tilde{z}^3,
$$
where
$
\tilde{y}=\min(v,v_{\max})/v_{\max} \text{ and }
\tilde{z}=\min(v',v'_{\max})/v'_{\max}.$

The reference prices, denoted by $V^{\mathrm{ref}}$ and reported in Table~\ref{tab:reference-prices}, are large-simulation plain LSMC estimates computed using $1200$ Euler steps and $1{,}200{,}000$ simulation paths. The table also reports empirical Monte Carlo standard errors and associated 95\% confidence intervals. In subsequent relative-error calculations and plots, only $V^{\mathrm{ref}}$ is used as the benchmark, while the associated standard errors and confidence intervals are included for information only.

For an estimator $V$ and reference price $V^{\mathrm{ref}}$, we report the relative error

$$
\varepsilon_{\mathrm{rel}}
:=
\left| \frac{V-V^{\mathrm{ref}}}{V^{\mathrm{ref}}} \right|,
$$
where $V^{\mathrm{ref}}$ is treated as fixed. To visualize the statistical variability of the estimators in the relative-error space, error bars in the relative-error plots are obtained by mapping the 95\% confidence intervals $[V_{\mathrm{left}},V_{\mathrm{right}}]$ of the estimators through
\[
V \mapsto \left| \frac{V - V^{\mathrm{ref}}}{V^{\mathrm{ref}}}\right|.
\]

\begin{table}[htbp]
\centering
\caption{LSMC reference values for the parameter set in Table~\ref{tab:parameter-setup}.}
\label{tab:reference-prices}
\small
\begin{tabularx}{0.82\textwidth}{@{}cccc@{}}
\toprule
$K$ & Reference price ($V^{\mathrm{ref}}$) & Std. error & 95\% confidence interval \\
\midrule
70 & 2.522851 & 0.007103 & [2.508929, 2.536773] \\
80 & 4.298933 & 0.009345 & [4.280617, 4.317249] \\
90 & 6.942119 & 0.011774 & [6.919042, 6.965196] \\
100 & 10.682037 & 0.014105 & [10.654391, 10.709683] \\
110 & 15.743520 & 0.015757 & [15.712636, 15.774404] \\
\botrule
\end{tabularx}
\end{table}

\subsection{Fixed path budget and varying Euler steps}

The first experiment keeps the Monte Carlo path budget fixed and varies the number of
Euler time steps. The two path budgets are $20{,}000$ and $60{,}000$, and
both methods are evaluated at $24,48,72,$ and $96$ Euler steps. This
experiment tests the sensitivity of the estimators to the Euler
discretization.

For $20{,}000$ paths, the relative errors of the plain LSMC estimator over the full step-sweep grid range from $0.344\%$ to $8.645\%$, while those of the Hybrid LSMC--PDE estimator range from 
$0.019\%$ to $7.459\%$. Figure~\ref{fig:step-sweep-20k} reports the strike-wise
relative errors. 


\begin{figure}[H]
\centering
\includegraphics[width=\textwidth]{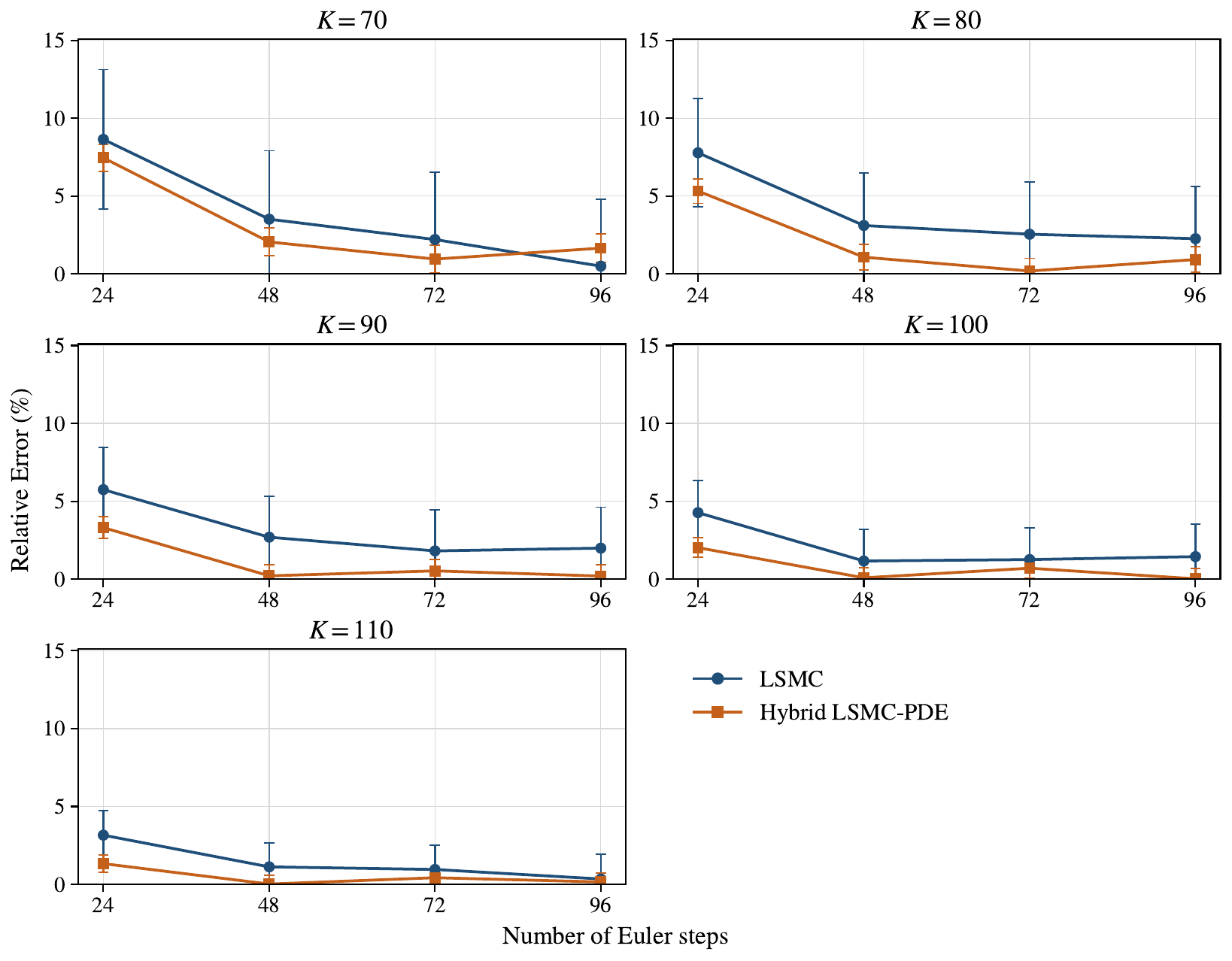}
\caption{Relative errors with associated error bars for  $20{,}000$-path step-sweep experiment.}
\label{fig:step-sweep-20k}
\end{figure}

For $60{,}000$ paths, the same design gives relative errors of the plain LSMC estimator ranging from $0.102\%$ to $7.720\%$, while those of the Hybrid LSMC--PDE estimator range from $0.028\%$ to $7.857\%$. Figure~\ref{fig:step-sweep-60k} reports the corresponding strike-wise relative errors. The Hybrid LSMC--PDE estimator does not give a
smaller relative error in every step-count configuration, but it gives smaller
relative errors in most of the reported configurations, with the clearest
differences at the intermediate step levels.

\begin{figure}[H]
\centering
\includegraphics[width=\textwidth]{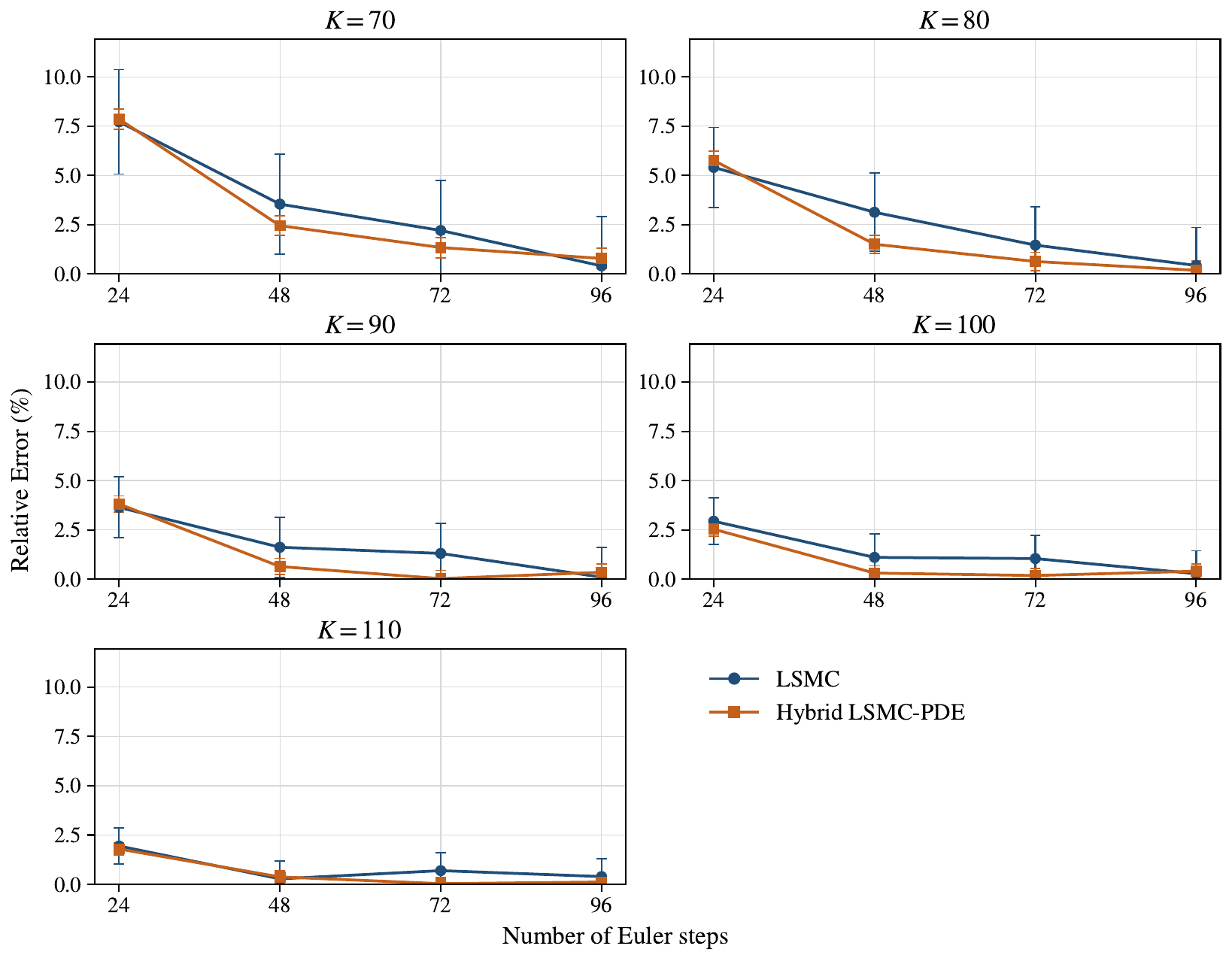}
\caption{Relative errors with associated error bars for $60{,}000$-path step-sweep experiment.}
\label{fig:step-sweep-60k}
\end{figure}

\FloatBarrier

\subsection{Fixed Euler step count and varying path budget}

The second experiment fixes the number of Euler time steps and varies the Monte Carlo
path budget. The reported path counts are $250,1000,5000,10000,20000,40000$,
and $60000$. Two discretization levels, $48$ and $60$ Euler steps, are
considered. This experiment focuses on sampling variation and regression error.

At $48$ Euler steps, the at-the-money strike $K=100$ gives relative errors of the plain LSMC estimator ranging from $1.105\%$ to $25.948\%$ over the path grid, while those of the Hybrid LSMC--PDE estimator range from $0.088\%$ to $11.118\%$. Figure~\ref{fig:path-sweep-48} shows the corresponding path-count sweep for all five strikes. 

\begin{figure}[H]
\centering
\includegraphics[width=\textwidth]{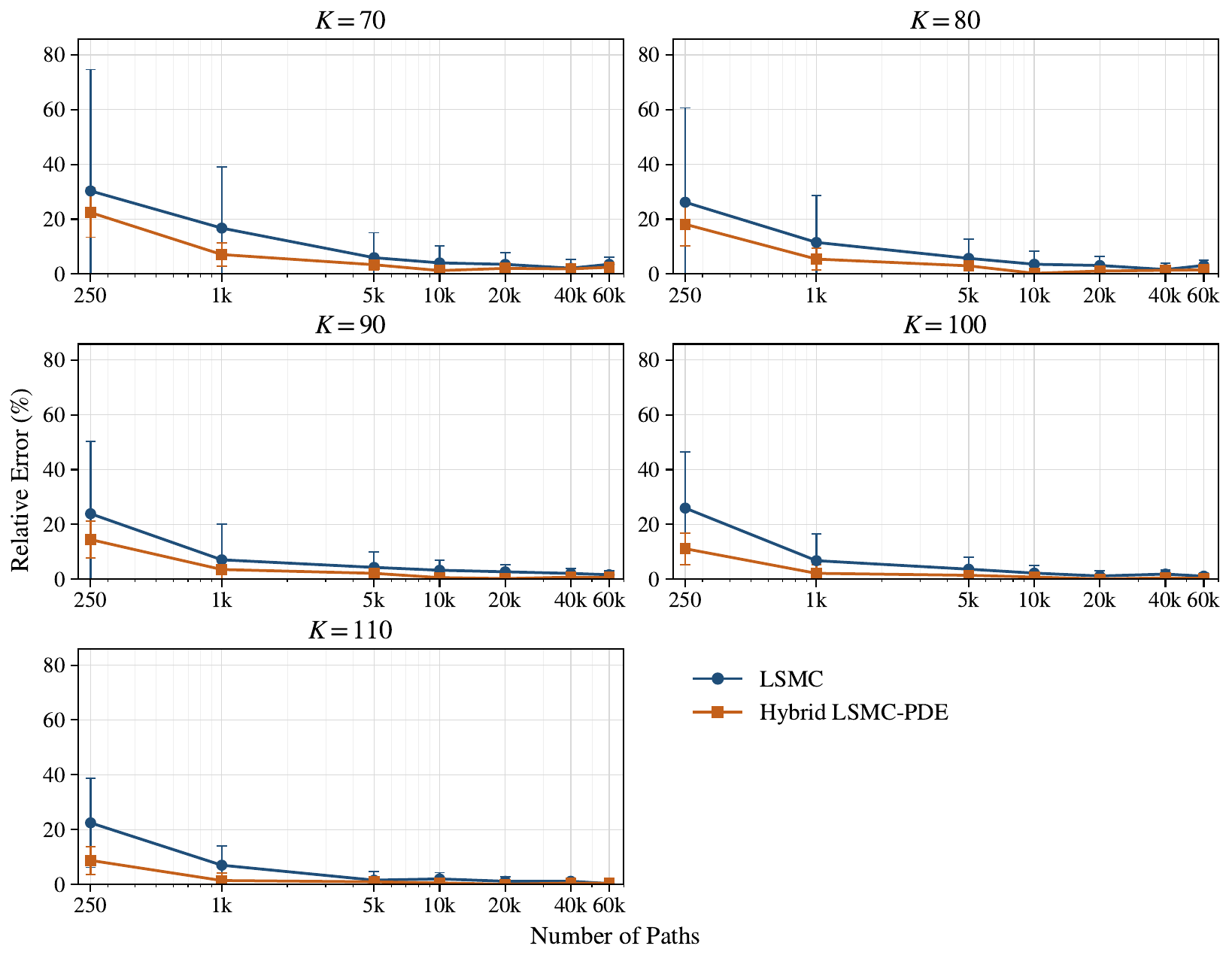}
\caption{Relative errors with associated error bars for the fixed $48$-step path-sweep experiment.}
\label{fig:path-sweep-48}
\end{figure}

At $60$ Euler steps, the corresponding relative errors of the plain LSMC estimator range from $1.324\%$ to $28.932\%$, while those of the Hybrid LSMC--PDE estimator range from $0.049\%$ to $10.102\%$. Figure~\ref{fig:path-sweep-60} reports the corresponding path-count sweep for all five strikes. The reduction in relative error is most visible in the low- and moderate-path regimes, where the plain LSMC estimator is most affected by sampling noise.

\begin{figure}[H]
\centering
\includegraphics[width=\textwidth]{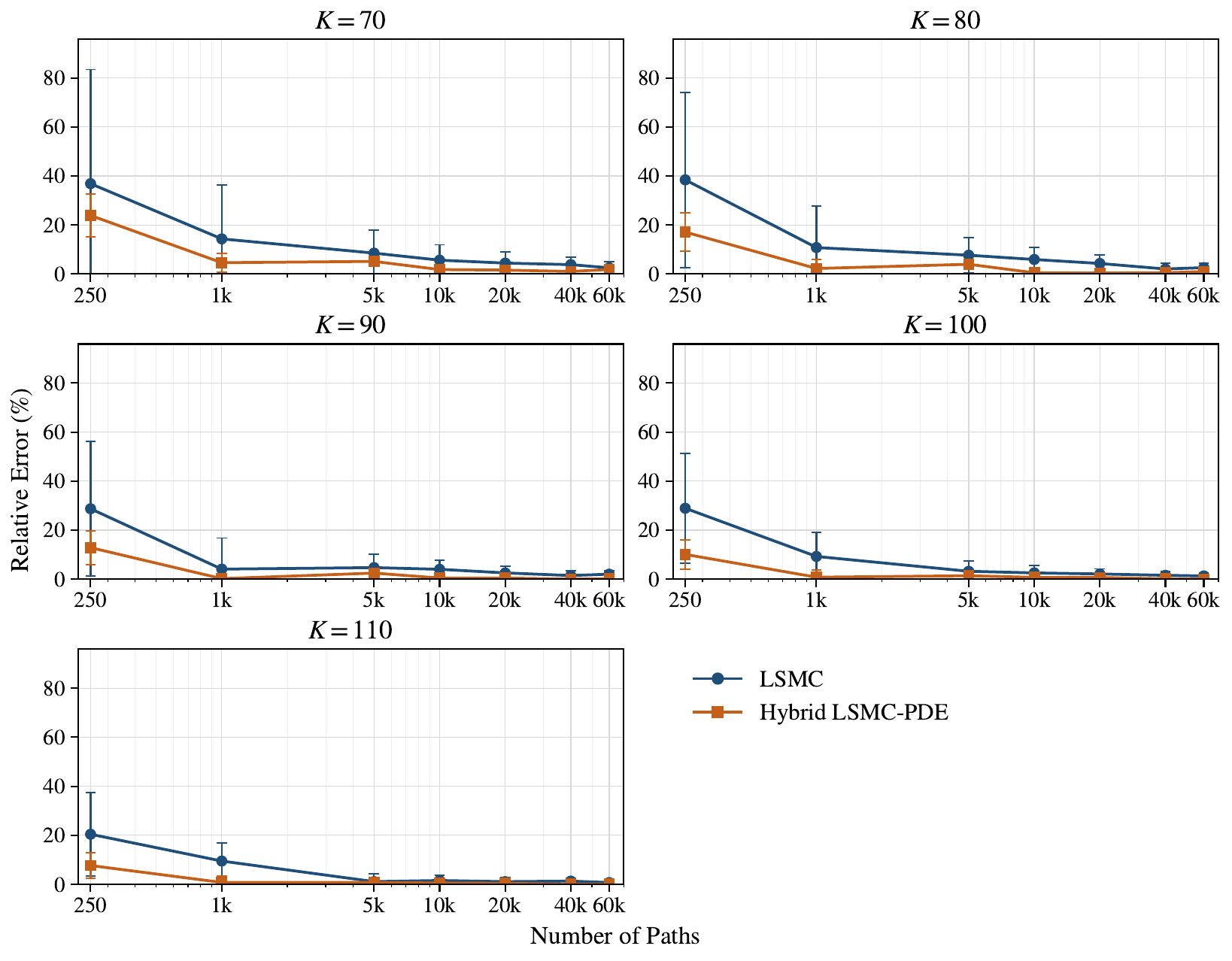}
\caption{ Relative errors with associated error bars for the fixed $60$-step path-sweep experiment.}
\label{fig:path-sweep-60}
\end{figure}

\FloatBarrier

As shown in Figures~\ref{fig:step-sweep-20k}--\ref{fig:path-sweep-60}, the error bars are generally largest in the low-path experiments, especially for the plain LSMC estimator. In comparison, the Hybrid LSMC--PDE estimator is usually centered at lower relative errors and shows more stable error bars in the low and moderate path regimes. As the number of paths increases, the error bars shrink for both methods, and the difference between plain LSMC and Hybrid LSMC--PDE becomes smaller.

Overall, the experimental results are consistent with the conditional PDE step reducing the dimension of the continuation regression. Note that the comparisons are made against LSMC reference prices in Table~\ref{tab:reference-prices} obtained from large-scale simulations rather than exact prices. Since these benchmarks are naturally aligned with the LSMC methodology, the competitive performance of the Hybrid LSMC–PDE method is encouraging. One possible explanation is that, by exploiting model structure through a conditional PDE for the stock-price dependence rather than a full regression approximation of the continuation value, the Hybrid LSMC–PDE method may provide a more faithful representation of the exercise decision.

The complete set of estimated option prices is reported in Appendix~\ref{app:price-grids} and may serve as benchmark values for future studies. More specifically, the pricing estimates underlying Figures~\ref{fig:step-sweep-20k}--\ref{fig:path-sweep-60} are reported in Tables~\ref{tab:app-step-sweep-prices-20k}--\ref{tab:app-path-sweep-60-prices}, respectively.

\section{Conclusion}\label{sec:conclusion}

In this paper, we studied the pricing of Bermudan options under a GDMR model. Bermudan pricing requires the approximation of continuation values at each exercise date, and the two variance factors in the GDMR model make this regression problem more difficult than in a one-factor model. At the same time, the model has a useful one-way coupling; the variance factors can be simulated independently of the stock price. This structure leads naturally to the Hybrid LSMC--PDE method, since the stock-price direction can be handled conditionally while the remaining variance dependence is approximated by least-squares regression. We condition on the Brownian paths driving the variance factors and project the asset Brownian motion onto these drivers. Under this conditioning, each continuation step reduces to a one-dimensional problem in log-price. This gives the Gaussian conditional representation based on $Z_n$, $I_n$, $B_n$, and the variance values at the next exercise date, and leads to an FFT-based implementation of the conditional asset step. We also included that, for the CEV exponents $\delta_1,\delta_2 \in [1/2,1]$, the GDMR variance system has a unique strong solution and remains nonnegative. This property is needed since the construction of Hybrid LSMC--PDE relies on valid simulated variance paths.

We conducted the numerical study using a set of calibrated parameters from the literature. We compared plain LSMC with the Hybrid LSMC--PDE estimator across different Euler step counts and different numbers of Monte Carlo paths. Because no closed-form Bermudan benchmark is available for the GDMR model, the comparison was made against large-simulation plain LSMC reference prices. Within this setup, the Hybrid LSMC--PDE estimator produced smaller relative errors in many of the reported cases. The clearest gains appeared when the number of paths was low or moderate. The error bars also support this observation: they are generally
largest in the low-path experiments, especially for the plain LSMC estimator, while the
Hybrid LSMC--PDE estimator shows more stable error bars in the low and
moderate path regimes. 

These results suggest that the conditional PDE step may reduce the difficulty in approximating the continuation value in this model. For future work, we can test the method under additional calibrated parameter sets, consider alternative reference prices or bias-control procedures such as low-biased or dual estimators, and study other payoff types and early-exercise contracts.

\backmatter

\bmhead{Acknowledgements}
The authors thank Anatoliy Malyarenko for helpful discussions.

\begin{appendices}

\section{Proof of Theorem~\ref{thm:wellposed}}\label{app:proof-wellposed}

\noindent\textit{Proof.}
We first work with auxiliary equations on $\R$. For this purpose, the variance diffusion coefficients are extended by
$$
\sigma_i(x)=\xi_i(x^+)^{\delta_i},\qquad
x^+=\max\{x,0\},\qquad i=1,2.
$$
Once nonnegativity is proved, this extension is inactive because both variance factors stay in $[0,\infty)$. The auxiliary equations then coincide with the original system on the state space $\mathcal D$.

\emph{Step 1: construction of $v'$.}
Define
$$
b_2(x):=\kappa_2(\theta-x),\qquad
\sigma_2(x):=\xi_2(x^+)^{\delta_2},\qquad x\in\R.
$$
The drift $b_2$ is globally Lipschitz and has linear growth. The diffusion coefficient $\sigma_2$ is continuous, has at most linear growth, and satisfies the Yamada-Watanabe modulus condition by Assumption~\ref{ass:delta-range}. Therefore pathwise uniqueness holds for
$$
\dd v'_t=b_2(v'_t)\dd t+\sigma_2(v'_t)\dd W^{(3)}_t,
\qquad v'_0\ge0;
$$
see \cite{yamada1971uniqueness} and \cite[Ch.~5]{karatzas1991brownian}. Since the coefficients are continuous with linear growth, weak existence holds. Weak existence together with pathwise uniqueness yields a unique strong solution by the Yamada-Watanabe theorem.

It remains to show that the solution does not leave the nonnegative half-line. At the boundary,
$$
\sigma_2(0)=0,\qquad b_2(0)=\kappa_2\theta\ge0.
$$
Moreover, on the negative extension one has
$$
\sigma_2(x)=0,\qquad b_2(x)=\kappa_2(\theta-x)\ge0,\qquad x\le0.
$$
Thus the diffusion has no outward noise at the boundary and the drift points inward on the negative side. The standard one-dimensional stochastic invariance criterion for closed intervals therefore yields
$$
v'_t\ge0,\qquad 0\le t\le T,\quad \Q\text{-a.s.}
$$

\emph{Step 2: construction of $v$.}
Having constructed $v'$, define
$$
b_1(t,x):=\kappa_1(v'_t-x),\qquad
\sigma_1(x):=\xi_1(x^+)^{\delta_1}.
$$
For $m\ge1$, set
$$
\tau_m:=\inf\{t\ge0:v'_t>m\}\wedge T.
$$
Since $v'$ has continuous paths on $[0,T]$, we have $\tau_m\uparrow T$
almost surely. On $[0,\tau_m]$, the drift $b_1$ is progressively
measurable, Lipschitz in $x$ with constant $\kappa_1$, and satisfies
$$
|b_1(t,x)|\le\kappa_1(m+|x|).
$$
The diffusion coefficient $\sigma_1$ is continuous, has at most linear growth,
and satisfies the Yamada-Watanabe modulus condition. The localized
one-dimensional existence and pathwise-uniqueness theorem for SDEs with
progressive random coefficients, obtained by the same Yamada-Watanabe argument
after localization, therefore gives a unique strong stopped solution of
$$
\dd v_t=b_1(t,v_t)\dd t+\sigma_1(v_t)\dd W^{(2)}_t,
\qquad t\le\tau_m.
$$
The possible correlation between $W^{(2)}$ and $W^{(3)}$ is immaterial here:
pathwise uniqueness compares solutions driven by the same Brownian vector and
the same realized input path $v'$. By pathwise uniqueness, the stopped
solutions are compatible on overlaps. Letting $m\to\infty$ gives a unique
strong solution for $v$ on $[0,T]$.

The same invariance argument gives nonnegativity. Since $v'_t\ge0$,
$$
b_1(t,0)=\kappa_1v'_t\ge0,\qquad \sigma_1(0)=0.
$$
On the negative extension,
$$
b_1(t,x)=\kappa_1(v'_t-x)\ge0,\qquad
\sigma_1(x)=0,\qquad x\le0.
$$
Hence the closed half-line is invariant and
$$
v_t\ge0,\qquad 0\le t\le T,\quad \Q\text{-a.s.}
$$

\emph{Step 3: construction of $S$.}
The process $v$ is continuous and finite on the compact interval $[0,T]$.
Therefore
$$
\int_0^T v_s\,\dd s<\infty\qquad \Q\text{-a.s.}
$$
The asset equation is linear in $S$, and its unique strong solution is
$$
S_t
=S_0
\exp\!\left(
\int_0^t \Bigl(r-\tfrac12 v_s\Bigr)\dd s
+\int_0^t \sqrt{v_s}\,\dd W_s^{(1)}
\right).
$$
Thus $S_t>0$ for all $t\in[0,T]$ whenever $S_0>0$.

The auxiliary positive-part extension is inactive along the constructed variance
paths, because $v_t,v'_t\ge0$. Hence the constructed solution solves the
original nonnegative-state system \eqref{eq:gdmr}. Pathwise uniqueness for the
auxiliary triangular system gives pathwise uniqueness for \eqref{eq:gdmr}.

Finally, existence and pathwise uniqueness for every deterministic initial state
imply uniqueness in law. Since the coefficients are time homogeneous, the
standard strong Markov property for well-posed time-homogeneous SDEs gives that
the solution is a time-homogeneous strong Markov process on $\mathcal D$.
\hfill$\square$

\section{Pricing estimates for the numerical experiments}\label{app:price-grids}
This appendix reports the Bermudan put pricing estimates corresponding to the numerical experiments in Section~\ref{sec:numerical}. The pricing estimates reported in Tables~\ref{tab:app-step-sweep-prices-20k}--\ref{tab:app-path-sweep-60-prices} correspond to the numerical experiments in Figures~\ref{fig:step-sweep-20k}--\ref{fig:path-sweep-60}. The associated large-simulation LSMC reference prices are reported in Table~\ref{tab:reference-prices}. All entries in Tables~\ref{tab:app-step-sweep-prices-20k}--\ref{tab:app-path-sweep-60-prices} are rounded to three decimal places. The label LSMC denotes the plain least-squares Monte Carlo estimator, and Hybrid denotes the Hybrid LSMC--PDE estimator.

\begin{table}[htbp]
\centering
\caption{Pricing estimates in the fixed-path step-sweep experiment with $20{,}000$ paths. Prices are rounded to three decimal places.}
\label{tab:app-step-sweep-prices-20k}
\footnotesize
\setlength{\tabcolsep}{3pt}
\begin{tabular}{lrrrr}
\toprule
Method/step & 24 & 48 & 72 & 96 \\
\midrule
LSMC $K=70$ & 2.741 & 2.612 & 2.579 & 2.536 \\
Hybrid $K=70$ & 2.711 & 2.575 & 2.547 & 2.565 \\
LSMC $K=80$ & 4.634 & 4.433 & 4.409 & 4.396 \\
Hybrid $K=80$ & 4.528 & 4.345 & 4.307 & 4.339 \\
LSMC $K=90$ & 7.341 & 7.129 & 7.068 & 7.080 \\
Hybrid $K=90$ & 7.172 & 6.957 & 6.905 & 6.956 \\
LSMC $K=100$ & 11.138 & 10.807 & 10.816 & 10.837 \\
Hybrid $K=100$ & 10.898 & 10.673 & 10.607 & 10.679 \\
LSMC $K=110$ & 16.241 & 15.920 & 15.894 & 15.798 \\
Hybrid $K=110$ & 15.953 & 15.747 & 15.676 & 15.768 \\
\botrule
\end{tabular}
\end{table}

\begin{table}[htbp]
\centering
\caption{Pricing estimates in the fixed-path step-sweep experiment with $60{,}000$ paths. Prices are rounded to three decimal places.}
\label{tab:app-step-sweep-prices-60k}
\footnotesize
\setlength{\tabcolsep}{3pt}
\begin{tabular}{lrrrr}
\toprule
Method/step & 24 & 48 & 72 & 96 \\
\midrule
LSMC $K=70$ & 2.718 & 2.612 & 2.579 & 2.533 \\
Hybrid $K=70$ & 2.721 & 2.585 & 2.557 & 2.543 \\
LSMC $K=80$ & 4.531 & 4.434 & 4.362 & 4.318 \\
Hybrid $K=80$ & 4.547 & 4.364 & 4.326 & 4.307 \\
LSMC $K=90$ & 7.195 & 7.055 & 7.033 & 6.949 \\
Hybrid $K=90$ & 7.207 & 6.987 & 6.940 & 6.918 \\
LSMC $K=100$ & 10.996 & 10.800 & 10.794 & 10.712 \\
Hybrid $K=100$ & 10.953 & 10.715 & 10.662 & 10.639 \\
LSMC $K=110$ & 16.050 & 15.788 & 15.853 & 15.806 \\
Hybrid $K=110$ & 16.026 & 15.803 & 15.749 & 15.725 \\
\botrule
\end{tabular}
\end{table}

\begin{table}[htbp]
\centering
\caption{Pricing estimates in the fixed 48-step path-sweep experiment. Prices are rounded to three decimal places.}
\label{tab:app-path-sweep-48-prices}
\footnotesize
\setlength{\tabcolsep}{2.5pt}
\begin{tabular}{lrrrrrrr}
\toprule
Method/path count & 250 & 1,000 & 5,000 & 10,000 & 20,000 & 40,000 & 60,000 \\
\midrule
LSMC $K=70$ & 3.288 & 2.946 & 2.673 & 2.625 & 2.612 & 2.576 & 2.612 \\
Hybrid $K=70$ & 3.089 & 2.702 & 2.608 & 2.554 & 2.575 & 2.570 & 2.585 \\
LSMC $K=80$ & 5.425 & 4.795 & 4.545 & 4.452 & 4.433 & 4.369 & 4.434 \\
Hybrid $K=80$ & 5.079 & 4.533 & 4.427 & 4.310 & 4.345 & 4.357 & 4.364 \\
LSMC $K=90$ & 8.601 & 7.433 & 7.241 & 7.169 & 7.129 & 7.088 & 7.055 \\
Hybrid $K=90$ & 7.946 & 7.187 & 7.090 & 6.903 & 6.957 & 6.992 & 6.987 \\
LSMC $K=100$ & 13.454 & 11.404 & 11.072 & 10.918 & 10.807 & 10.881 & 10.800 \\
Hybrid $K=100$ & 11.870 & 10.910 & 10.833 & 10.601 & 10.673 & 10.730 & 10.715 \\
LSMC $K=110$ & 19.278 & 16.843 & 15.985 & 16.057 & 15.920 & 15.921 & 15.788 \\
Hybrid $K=110$ & 17.120 & 15.962 & 15.889 & 15.660 & 15.747 & 15.819 & 15.803 \\
\botrule
\end{tabular}
\end{table}

\begin{table}[htbp]
\centering
\caption{Pricing estimates in the fixed 60-step path-sweep experiment. Prices are rounded to three decimal places.}
\label{tab:app-path-sweep-60-prices}
\footnotesize
\setlength{\tabcolsep}{2.5pt}
\begin{tabular}{lrrrrrrr}
\toprule
Method/path count & 250 & 1,000 & 5,000 & 10,000 & 20,000 & 40,000 & 60,000 \\
\midrule
LSMC $K=70$ & 3.452 & 2.884 & 2.738 & 2.665 & 2.635 & 2.619 & 2.587 \\
Hybrid $K=70$ & 3.125 & 2.639 & 2.653 & 2.568 & 2.563 & 2.549 & 2.570 \\
LSMC $K=80$ & 5.950 & 4.762 & 4.629 & 4.553 & 4.484 & 4.387 & 4.411 \\
Hybrid $K=80$ & 5.034 & 4.396 & 4.470 & 4.321 & 4.318 & 4.321 & 4.342 \\
LSMC $K=90$ & 8.936 & 7.227 & 7.270 & 7.222 & 7.120 & 7.045 & 7.084 \\
Hybrid $K=90$ & 7.836 & 6.957 & 7.115 & 6.910 & 6.910 & 6.939 & 6.957 \\
LSMC $K=100$ & 13.773 & 11.673 & 11.026 & 10.955 & 10.910 & 10.850 & 10.823 \\
Hybrid $K=100$ & 11.761 & 10.596 & 10.831 & 10.604 & 10.610 & 10.666 & 10.677 \\
LSMC $K=110$ & 18.962 & 17.240 & 15.920 & 15.991 & 15.918 & 15.952 & 15.860 \\
Hybrid $K=110$ & 16.953 & 15.621 & 15.863 & 15.659 & 15.680 & 15.755 & 15.757 \\
\botrule
\end{tabular}
\end{table}

\end{appendices}

\bibliography{references}

\end{document}